\documentclass[twocolumn,journal]{IEEEtran}

\ifCLASSINFOpdf
\else
   \usepackage[dvips]{graphicx}
\fi
\usepackage{url}

\usepackage{graphicx}
\usepackage{amsmath, amsfonts, amssymb}
\usepackage[colorlinks=true, allcolors=blue]{hyperref}

\usepackage[ruled, lined, linesnumbered, longend]{algorithm2e}

\usepackage{color}
\usepackage{xcolor}

\usepackage{amsthm}
\usepackage{amssymb}

\newtheorem{assumption}{Assumption}
\newtheorem{lemma}{Lemma}
\newtheorem{theorem}{Theorem}

\theoremstyle{remark}
\newtheorem*{remark}{Remark}

\definecolor{webgreen}{rgb}{0,.5,0} 
\definecolor{lightgreen}{rgb}{0.6,.8,0.6} 
\definecolor{lightgray}{rgb}{0.78,0.78,0.78}
\definecolor{red}{rgb}{1,0,0} 

\SetKwInOut{KwIn}{Input}
\SetKwInOut{KwOut}{Output}

\SetCommentSty{mycommfont}

\usepackage{ulem}

\usepackage{soul}

\renewcommand{\emph}{\textit}

\usepackage{multirow}
\usepackage{array}
\usepackage{booktabs}
\usepackage{xfrac}
\usepackage{color}

\usepackage{comment}
\usepackage{hyperref}

\usepackage[compress]{cite}

\usepackage{graphicx}
\usepackage{tikz}
\usetikzlibrary{shapes.misc}
\tikzset{cross/.style={cross out, draw, 
         minimum size=2*(#1-\pgflinewidth), 
         inner sep=0pt, outer sep=0pt}}
 \usetikzlibrary{calc}         
\usepackage{pgfplots}
\usetikzlibrary{3d}

\definecolor{lightgreen}{rgb}{.9,1,.9}
\definecolor{red}{rgb}{1,0,0}
\definecolor{blue}{rgb}{0,0,1}

\def\lim{\mathop{\mathsf{lim}}} 

\def\log{\mathrm{log}}

\newcommand{\by}{\mathbf{y}}

\newcommand{\bx}{\mathbf{x}}

\newcommand{\bz}{\mathbf{z}}

\newcommand{\bu}{\mathbf{u}}

\begin{document}

\title{Jensen-Bregman Split Gibbs Sampling}

\author{Nicolas Dobigeon,~\IEEEmembership{Senior Member,~IEEE}, Mame Diarra Fall and Eric Barat

\thanks{Part of this work was supported by the AI Interdisciplinary Institute ANITI funded by the France 2030 program (ANR-23-IACL-0002) and the BACKUP project (ANR-23-CE40-0018-01).}
\thanks{N. Dobigeon is with Univ. de Toulouse, IRIT/INP-ENSEEIHT, 31071 Toulouse, France (email:  nicolas.dobigeon@enseeiht.fr).}
\thanks{M. D. Fall is with LITIS, Université Rouen Normandie, Rouen, France, France (email: diarra.fall@univ-rouen.fr).}
\thanks{E. Barat is with Université Paris-Saclay, CEA, List, Palaiseau, France 
  (email: eric.barat@cea.fr.}
}

\markboth{Dobigeon \MakeLowercase{\textit{et al.}}: Jensen-Bregman Split Gibbs Sampling}
{Dobigeon \MakeLowercase{\textit{et al.}}: Jensen-Bregman Split Gibbs Sampling}
\maketitle

\begin{abstract}
Split Gibbs sampling (SGS) is a versatile tool to sample from complex posterior distributions, such as those arising in imaging inverse problems. However, existing SGS instances are limited to quadratic variable couplings, restricting their applicability to Gaussian likelihoods. This letter genearalizes SGS to non-Euclidean geometry by introducing a coupling kernel defined through a Jensen-Bregman divergence. This framework not only enables posterior sampling in these settings but also unifies several \emph{ad hoc} sampling schemes from the literature under a principled theoretical foundation.
\end{abstract}

\begin{IEEEkeywords}
Bregman geometry, inverse problems, Monte Carlo sampling.
\end{IEEEkeywords}

\IEEEpeerreviewmaketitle

\section{Introduction}\label{sec:introduction}
\IEEEPARstart{S}{ampling} from complex probability distributions $\pi(\bx)$ with composite potentials remains a central challenge in modern statistical inference and machine learning. To handle high-dimensional, non-conjugate, or implicitly defined models, split Gibbs sampling (SGS)  has emerged as a powerful Markov chain Monte Carlo (MCMC) algorithm able that exploits the structure of composite potentials for efficient exploration of the parameter space \cite{Vono2019split,vono2022efficient}. Inspired by splitting-based optimization techniques -- such as the Alternating Direction Method of Multipliers (ADMM) and half-quadratic splitting (HQS) \cite{neal2011distributed,rendell2020global} -- SGS decomposes the sampling task into simpler, tractable subproblems. This decomposition not only facilitates distributed inference, akin to consensus ADMM \cite{rendell2020global}, but also unlocks new avenues for scalable Bayesian inference. In this context, SGS enables practitioners to sample from posterior distributions of the form
\begin{equation}\label{eq:posterior}
    \pi(\bx) \propto \exp\left( - f(\bx;\by) - g(\bx)\right), \ \bx\in\mathbb{R}^m
\end{equation}
where $f(\cdot;\by)$ stands for a data-fidelity term associated with observations $\by \in \mathbb{R}^n$ and $g(\cdot)$ encodes a regularization potential. By leveraging conditional independence with a specifically introduced auxiliary $\bz\in\mathbb{R}^m$, the canonical quadratic splitting-based SGS (Q-SGS) draws samples $\left\{\bx^{(t)},\bz^{(t)}\right\}_{t\geq 1}$ following the iterative scheme
\begin{subequations}\label{eq:SGS_canonical}
\begin{align}
    \bx &\sim \pi_{\rho}(\bx|\bz) \propto \exp\left( - f(\bx;\by) - \tfrac{1}{2\rho}\left\|\bx - \bz\right\|_2^2\right) \label{eq:likelihood_step}\\
    \bz &\sim \pi_{\rho}(\bz|\bx) \propto \exp\left( - g(\bz) - \tfrac{1}{2\rho}\left\|\bx - \bz\right\|_2^2\right) \label{eq:prior_step}.    
\end{align}
\end{subequations}
This scheme strikingly decouples the two potentials $f(\cdot;\by)$ and $g(\cdot)$ into  two distinct conditional distributions handled separately, commonly referred to as the \emph{likelihood step} and the \emph{prior step}, respectively. Beyond its standard implementation, Q-SGS admits insightful interpretations and natural extensions that further enhance its versatility \cite{pereyra2023split}. For instance, the integration of Langevin or proximal steps within the Gibbs scheme allows for gradient-based exploration of the parameter space \cite{plassier2021dg}, e.g, once the prior distribution admits a computable score function \cite{Faye_2025sub} or a proximable potential \cite{vono2018sparse}. Another archetypal example of this versatility lies in its ability to sample from posterior distributions with implicitly defined priors, paving the way for plug-and-play (PnP) algorithms. Interpreting the prior step \eqref{eq:prior_step} as a stochastic denoising task, PnP-SGS has been firstly proposed in \cite{Coeurdoux2024pnp} and subsequently revisited in \cite{bouman2023generative,wu2024principled,Sun2024provable} as a stochastic counterpart of PnP-HQS embedding various generative diffusion-based models as implicit regularizations. These developments position SGS as a flexible and scalable framework for contemporary Bayesian inference, with applications ranging from high-dimensional regression to imaging inverse problems \cite{Faye2024}. 

All aforementioned instances of SGS  have exclusively adopted the scheme \eqref{eq:SGS_canonical} which explicitly relies on a quadratic coupling between the variables $\bx$ and $\bz$. This choice is particularly well suited for Gaussian likelihood functions, where the quadratic coupling potential is conjugate, yielding a straightforward likelihood step \cite{vono2022high}. However, the underlying Euclidean geometry may fail to capture intrinsic structures arising in problems characterized by non-Gaussian likelihoods. Natural extensions of Q-SGS would be to consider coupling defined through Bregman divergences, which generalize the notion of distance to accommodate a broader class of geometries. Even if seemingly appealing, this naive extension has major limitations which precludes its practical value. First, the generally asymmetric nature of the divergence questions the roles of the two variables $\bx$ and $\bz$. Second, even when these roles are chosen  to preserve  conjugacy and facilitate the likelihood step, the resulting prior step may involve an overall potential with poor regularity properties. Such difficulties are highlighted in \cite{faye2025bregman} when solving a Bayesian Poisson inverse problem. To finally arrive at a sampling algorithm with amenable steps, subtle adaptations were made to the standard SGS scheme. This is very unfortunate because this single example demonstrates that extending SGS to the non-Euclidean case cannot be done systematically. It rather requires some expertise in Bayesian modeling that was not essential in the Gaussian case. 

This paper proposes to restore both the versatility and the systematic nature of SGS implementation in the non-Euclidean geometry. To this end, we first recall that SGS anchors within the frame of asymptotically exact data augmentation (AXDA) \cite{vono2020asymptotically}. Building on this unifying perspective, we depart from Q-SGS \eqref{eq:SGS_canonical} by introducing a non-quadratic splitting strategy. To go beyond the crude choice that would adopt a splitting defined by non-symmetric Bregman divergences, we rather consider a splitting based on Jensen–Bregman divergences. This choice leads to a statistical model that can be reformulated as a double-level AXDA model, yielding the derivation of the proposed  Jensen–Bregman split Gibbs sampler (JB-SGS). Finally, by considering two particular statistical models for the likelihood, we demonstrate that JB-SGS defines a general sampling framework that encompasses several algorithms previously proposed in the literature. We emphasize that these existing methods were originally derived in an \emph{ad hoc} manner to overcome limitations of Q-SGS. On the contrary, the proposed JB-SGS arises from a principled and unified construction.

\section{Split Gibbs sampling with Jensen-Bregman coupling kernel}\label{sec:formulation}
\subsection{Single-level AXDA with Jensen-Bregman splitting}\label{sec:AXDA_JS}

Although initially designed from an empirical perspective, SGS owns theoretical foundations which can be anchored in the framework of AXDA \cite{vono2020asymptotically}. In a Bayesian setting, AXDA defines the surrogate distribution (aka the \emph{split} distribution) as
\begin{equation}\label{eq:augmented_distribution}
  \pi_{\rho}(\bx,\bz) \propto \exp\left[ - f(\bx;\by) - g(\bz)\right] \kappa_{\rho}(\bz,\bx)
\end{equation}
where the coupling kernel $\kappa_{\rho}(\cdot,\cdot)$ satisfies the following assumption.
\begin{assumption}\label{assump:kernel}
    The kernel $\kappa_\rho(\cdot, \bx)$ is defined such that $\int \kappa_{\rho}(\bz,\bx)d\bz < +\infty $  and weakly converges towards the Dirac measure at $\bx$ as $\rho \rightarrow 0$.
\end{assumption}
A key advantage of AXDA lies in the following property, satisfied by the marginal split distribution $\tilde{\pi}_{\rho}(\bx) \triangleq \int \pi(\bx,\bz) d\bz$.
\begin{theorem}[see \cite{vono2020asymptotically}]\label{prop:AXDA_TV}
    If Assumption \ref{assump:kernel} is satisfied, then
\begin{equation}\label{eq:convTV}
    \left\| \pi(\bx) - {\tilde{\pi}}_\rho(\bx) \right\|_{\mathrm{TV}} \xrightarrow[\rho \to 0]{} 0.
\end{equation}
\end{theorem}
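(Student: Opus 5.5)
We would follow the argument of \cite{vono2020asymptotically} and first write the marginal split distribution as a normalized, smoothed version of the target. Integrating \eqref{eq:augmented_distribution} over $\bz$ gives
\begin{equation*}
\tilde{\pi}_\rho(\bx) = \frac{e^{-f(\bx;\by)}\,\psi_\rho(\bx)}{Z_\rho}, \qquad \psi_\rho(\bx) \triangleq \int e^{-g(\bz)}\kappa_\rho(\bz,\bx)\,d\bz,
\end{equation*}
with $Z_\rho = \int e^{-f(\bx;\by)}\psi_\rho(\bx)\,d\bx$. For comparison, $\pi(\bx) = e^{-f(\bx;\by)-g(\bx)}/Z$. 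We take the kernel normalized, $\int\kappa_\rho(\bz,\bx)d\bz=1$; otherwise we divide by this finite normalizing function, which is the role of the integrability part of Assumption~\ref{assump:kernel}.

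The first step is pointwise convergence of $\psi_\rho$. Assumption~\ref{assump:kernel} says $\kappa_\rho(\cdot,\bx)$ converges weakly to $\delta_{\bx}$, so whenever $e^{-g}$ is bounded and continuous we get $\psi_\rho(\bx)\to e^{-g(\bx)}$ for every $\bx$. For lower semicontinuous $g$, or $g$ continuous almost everywhere, we would use the portmanteau theorem to get convergence at continuity points, hence almost everywhere. This is the standard implicit regularity in AXDA, and we would state it as a mild condition on $g$ (e.g.\ $g$ bounded below and continuous a.e.).

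The second step is to pass to the limit in the normalizing constant and then apply Scheff\'e's lemma. We have $\psi_\rho\le \sup e^{-g}<\infty$ and $e^{-f}$ integrable, since $\pi$ is proper and $g$ is bounded below; hence $e^{-f}\psi_\rho$ is dominated by $C e^{-f}\in L^1$. Dominated convergence then gives $Z_\rho\to Z$ and $\tilde\pi_\rho(\bx)\to\pi(\bx)$ almost everywhere. Scheff\'e's lemma (pointwise convergence of densities implies $L^1$ convergence) yields $\|\pi-\tilde\pi_\rho\|_{\mathrm{TV}}=\tfrac12\|\pi-\tilde\pi_\rho\|_{L^1}\to0$, which is \eqref{eq:convTV}.

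The main obstacle is the first step. Weak convergence only controls bounded continuous test functions, and $e^{-g}$ may be discontinuous, for instance when $g$ is an indicator of a constraint set. In that case we would restrict to continuity points, using that the boundary of a convex constraint set has Lebesgue measure zero, or we would approximate $e^{-g}$ in $L^1(e^{-f}d\bx)$ by continuous functions and control the error uniformly in $\rho$. A further subtlety arises when the kernel is not translation-invariant, as with the Jensen--Bregman kernels introduced next: uniform domination then needs $\sup_{\bx}\int\kappa_\rho(\bz,\bx)d\bz$ to stay bounded, and we would verify this for the kernels used later.
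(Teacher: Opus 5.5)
Your overall route matches the standard argument for this AXDA result: pointwise convergence of the $\bx$-marginal from $\kappa_\rho(\cdot,\bx)\to\delta_{\bx}$, followed by Scheff\'e's lemma. The paper gives no proof of its own, only the citation. Your extra regularity on $g$ is genuinely needed, since weak convergence only tests bounded continuous functions. The step that fails is the domination. Properness of $\pi$ together with $g$ bounded below does \emph{not} give $e^{-f}\in L^1$. From $g\ge -M$ you only get $e^{-f-g}\le e^{M}e^{-f}$, which goes the wrong way. Transferring integrability from $e^{-f-g}$ to $e^{-f}$ would require $g$ bounded \emph{above}, which excludes constraints and any prior with decaying tails. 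In the general model \eqref{eq:posterior}, with $\bx\in\mathbb{R}^m$ and $\by\in\mathbb{R}^n$, the likelihood is typically not integrable in $\bx$. For example, take $f(\bx;\by)=\tfrac{1}{2\sigma^2}\|\by-\mathbf{A}\bx\|_2^2$ with $\ker\mathbf{A}\neq\{\mathbf{0}\}$ (inpainting, super-resolution, $n<m$). Then $e^{-f}$ is constant along $\ker\mathbf{A}$, yet $\pi$ is proper for any proper prior. So $Ce^{-f}$ is not an integrable majorant, neither $Z_\rho<\infty$ nor $Z_\rho\to Z$ is established, and Scheff\'e cannot be invoked.

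There are two ways to repair this, keeping your conditions on $g$. The first is to make $e^{-f}\in L^1$ an explicit hypothesis. It holds in the denoising models of Section~\ref{sec:application}, where the Gaussian and Poisson likelihoods are Gaussian and Gamma in $\bx$. The second is to swap the roles of $f$ and $g$: assume $f$ bounded below and $e^{-g}\in L^1$, and set $q_\rho(\bx)=\int e^{-g(\bz)}\kappa_\rho(\bz,\bx)\,d\bz$. Rename it this way, because $\psi_\rho$ already denotes the normalizing potential in \eqref{eq:JS_kernel}. Suppose $\int\kappa_\rho(\bz,\bx)\,d\bx=1$ for every $\bz$; this holds for translation-invariant kernels and is what the $\bz$-only normalizing potential in \eqref{eq:JS_kernel} enforces. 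Tonelli then gives $\int q_\rho=\int e^{-g}$, so Scheff\'e applied to $q_\rho\to e^{-g}$ a.e.\ yields $\|q_\rho-e^{-g}\|_{1}\to0$. Hence $\|e^{-f}q_\rho-e^{-f-g}\|_{1}\le(\sup e^{-f})\,\|q_\rho-e^{-g}\|_{1}\to0$, which gives $Z_\rho\to Z$ and then TV convergence of the normalized densities.

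Two smaller corrections. First, dividing $\kappa_\rho$ by $c_\rho(\bx)=\int\kappa_\rho(\bz,\bx)\,d\bz$ is not without loss of generality: it multiplies the marginal $\tilde\pi_\rho$ by the $\bx$-dependent factor $1/c_\rho(\bx)$. Instead, keep $c_\rho$ and use that weak convergence of the finite measures $\kappa_\rho(\cdot,\bx)\,d\bz$ to $\delta_{\bx}$ forces $c_\rho(\bx)\to1$. Second, lower semicontinuity of $g$ does not give a.e.\ continuity of $e^{-g}$; take $g$ to be the indicator of a closed nowhere-dense set of positive measure. So keep the a.e.-continuity hypothesis.
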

The Q-SGS defined in \eqref{eq:SGS_canonical} can be interpreted as a conventional Gibbs sampler targeting the split distribution \eqref{eq:augmented_distribution} in the specific case of a Gaussian coupling kernel, i.e., $\kappa_{\rho}(\bz,\bx) \propto \exp\left(-\frac{1}{2\rho}\left\|\bx - \bz\right\|_2^2\right)$. Moreover, SGS produces samples $\left\{\bx^{(t)}\right\}_{t\geq 1}$ asymptotically distributed according to $\tilde{\pi}_{\rho}(\bx)$, which itself converges to the true posterior distribution $\pi(\bx)$ as the coupling parameter $\rho$ vanishes.

However, as stated above, choosing a Gaussian coupling kernel is unsuitable when handling non-Gaussian likelihoods. From a practical point-of-view, it leads to a likelihood step \eqref{eq:likelihood_step} that lacks conjugacy  turning it into a non-trivial sampling task. More fundamentally, such a quadratic coupling does not account for the geometry of the problem and possible corresponding constraints. A natural alternative is to define the coupling kernel through a Bregman divergence \cite{bregman1967relaxation}, whose properties make it appealing in this context. It defines admissible coupling kernels for AXDA \cite{vono2020asymptotically} and is closely connected to exponential family distributions \cite{banerjee2005clustering}. Yet, they are generally asymmetric, which questions the position of the two variables $\bx$ and $\bz$ in the kernel, which directly  impacts both the definition of the split distribution and the resulting Gibbs sampling scheme. To overcome this limitation, we  define a coupling kernel satisfying Assumption \ref{assump:kernel}  as
\begin{equation}\label{eq:JS_kernel}
    {\kappa}_{\rho}(\bz,\bx) = \exp\left[ -\tfrac{2}{\rho} {D}_{h}(\bz,\bx) + \psi_{\rho}(\bz)\right]
\end{equation}
where $\psi_{\rho}(\bz)$ is a normalizing potential and
\begin{equation}
    {D}_{h}(\bz,\bx) = \frac{1}{2}  \left[d_h\left(\bz,\frac{\bx+\bz}{2}\right) +d_{h}\left(\bx,\frac{\bx+\bz}{2}\right) \right]
\end{equation}
is the Jensen-Bregman (or Burbea-Rao) divergence, a symmetric extension of the Bregman divergence $d_h(\cdot,\cdot)$ generated by a strictly convex and differentiable function $h: \mathbb{R}^n \to \mathbb{R}$  \cite{nielsen2011burbea}.

\subsection{Equivalent double-level AXDA with Bregman splitting}
Devising a two-step SGS scheme similar to \eqref{eq:SGS_canonical} for the coupling kernel \eqref{eq:JS_kernel} is challenging mainly because of the complex conditional structure of the augmented distribution \eqref{eq:augmented_distribution}. Interestingly, a much more convenient conditional structure emerges by resorting to the following result.
\begin{lemma}\label{lemma:JS_kernel2}
    As $\rho$ vanishes, the Jensen-Bregman coupling kernel satisfies
\begin{equation}\label{eq:JS_kernel2}
    {\kappa}_{\rho}(\bz,\bx) \propto_{\bz} \int \kappa_{\rho}^{+}(\bz,\bu) \kappa_{\rho}^{-}(\bu,\bx) d\bu
\end{equation}
where the kernels $\kappa_{\rho}^{+}(\cdot,\cdot)$ and $\kappa_{\rho}^{-}(\cdot,\cdot)$ defined as
\begin{align}
    {\kappa}_{\rho}^{+}(\bz,\cdot) &\triangleq \exp\left[ -\tfrac{1}{\rho} {d}_{h}(\bz,\cdot) + \varphi_{\rho}^{+}(\bz) \right] \label{eq:B_kernel}\\
    {\kappa}_{\rho}^{-}(\bu,\cdot) &\triangleq \exp\left[ -\tfrac{1}{\rho} {d}_{h}(\cdot,\bu) + \varphi_{\rho}^{-}(\bu) \right] \label{eq:B_kernel_bis}
\end{align}
embed the (asymmetric) Bregman divergence $d_h(\cdot,\cdot)$ in both sides and $\varphi_{\rho}^{\pm}(\cdot)$ are normalizing potentials. 
\end{lemma}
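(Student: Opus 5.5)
Write the right-hand side of \eqref{eq:JS_kernel2} explicitly, by inserting the definitions \eqref{eq:B_kernel}--\eqref{eq:B_kernel_bis}:
\begin{equation*}
I_\rho(\bz,\bx) \triangleq \int \exp\left[-\tfrac{1}{\rho}\left(d_h(\bz,\bu)+d_h(\bx,\bu)\right) + \varphi_\rho^{+}(\bz)+\varphi_\rho^{-}(\bu)\right] d\bu .
\end{equation*}
The plan is to evaluate this integral by a Laplace approximation in $\bu$ as $\rho\to 0$. The exponent is dominated by the potential $\Phi(\bu)=d_h(\bz,\bu)+d_h(\bx,\bu)$, so the first step is to locate its minimiser.

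Expanding $d_h(\bz,\bu)=h(\bz)-h(\bu)-\langle\nabla h(\bu),\bz-\bu\rangle$ gives
\begin{equation*}
\Phi(\bu)=h(\bz)+h(\bx)-2h(\bu)-\langle \nabla h(\bu),\bx+\bz-2\bu\rangle .
\end{equation*}
The stationarity condition reads $\nabla^2 h(\bu)(\bx+\bz-2\bu)=0$. Strict convexity of $h$ (assuming $\nabla^2 h\succ 0$) then forces $\bu^\star=\frac{\bx+\bz}{2}$. This is the key structural fact: a Bregman divergence taken with respect to its second argument is minimised at the arithmetic mean.

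At $\bu^\star$ the minimal value is
\begin{equation*}
\Phi(\bu^\star)=h(\bz)+h(\bx)-2h\left(\tfrac{\bx+\bz}{2}\right).
\end{equation*}
The Jensen-Bregman divergence simplifies to the same Burbea-Rao gap,
\begin{equation*}
D_h(\bz,\bx)=\tfrac12\Big[h(\bz)+h(\bx)-2h\left(\tfrac{\bx+\bz}{2}\right)\Big],
\end{equation*}
because the linear terms cancel: $\bz-\bu^\star=-(\bx-\bu^\star)$. Hence $\Phi(\bu^\star)=2D_h(\bz,\bx)$, and the leading factor $\exp[-\Phi(\bu^\star)/\rho]=\exp[-\frac{2}{\rho}D_h(\bz,\bx)]$ is exactly the exponential in \eqref{eq:JS_kernel}.

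Laplace's method then yields
\begin{equation*}
I_\rho(\bz,\bx)\approx \exp\left[-\tfrac{2}{\rho}D_h(\bz,\bx)+\varphi_\rho^{+}(\bz)+\varphi_\rho^{-}(\bu^\star)\right](2\pi\rho)^{m/2}\det\left(\nabla^2\Phi(\bu^\star)\right)^{-1/2}.
\end{equation*}
The Hessian at the minimiser is $\nabla^2\Phi(\bu^\star)=2\nabla^2 h(\bu^\star)$, since the terms involving $\nabla^3 h$ multiply $\bx+\bz-2\bu^\star=0$.

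The step I expect to be the main obstacle is showing that the remaining prefactors depend on $\bz$ only through a normalising potential, up to errors vanishing with $\rho$. These prefactors are $\varphi_\rho^{-}(\bu^\star)$ and $\det\nabla^2 h(\bu^\star)^{-1/2}$, and both depend on $(\bx+\bz)/2$. I would argue this as follows:
\begin{itemize}
\item The kernel $\kappa_\rho$ concentrates on $\|\bz-\bx\|=O(\sqrt{\rho})$, by Assumption~\ref{assump:kernel} and the curvature of $D_h$.
\item On that region, $\bu^\star=\bx+O(\sqrt\rho)$.
\item Assuming $\varphi_\rho^{-}$ and $\nabla^2h$ are smooth, these prefactors therefore equal their values at $\bx$ up to a factor $1+o(1)$.
\item Any residual $\bz$-dependence can be absorbed, together with $\varphi_\rho^{+}(\bz)$, into the normalising potential $\psi_\rho(\bz)$.
\end{itemize}
This gives proportionality in $\bz$ in the asymptotic sense stated in the lemma. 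I would note that the relation is exact when $h$ is quadratic: then $\Phi$ is quadratic in $\bu$, the Laplace formula is exact, and one recovers Q-SGS.
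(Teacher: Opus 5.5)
Your proof is correct at the same heuristic level as the paper's sketch, but it takes a different route.

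The paper starts from the exact three-point (parallelogram) identity $d_h(\bz,\bu)+d_h(\bx,\bu)=2D_h(\bz,\bx)+2d_h\left(\tfrac{\bx+\bz}{2},\bu\right)$, valid for every $\bu$. The integrand therefore factors exactly into $\exp[-\tfrac{2}{\rho}D_h(\bz,\bx)+\varphi_\rho^{+}(\bz)]$ times a Bregman kernel in $\bu$ with parameter $\rho/2$ centred at $\tfrac{\bx+\bz}{2}$. Laplace's method is used only afterwards, on the normalising potential, to show that the $\bu$-integral of this second factor is asymptotically constant. You instead apply Laplace's method to the whole $\bu$-integral, finding the minimiser by stationarity and computing the minimum value and the Hessian.

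You were one line from the paper's identity. Regrouping your own expansion of $\Phi$ gives $\Phi(\bu)=2D_h(\bz,\bx)+2d_h(\bu^\star,\bu)$ exactly. This also shows that $\bu^\star$ is the \emph{global} minimiser; uniqueness of the critical point alone does not, since $\Phi$ is generally not convex in $\bu$.

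More usefully, the step you flag as the main obstacle disappears once you apply Laplace's method to the normaliser itself: $e^{-\varphi_\rho^{-}(\bu)}=\int e^{-d_h(\mathbf{w},\bu)/\rho}\,d\mathbf{w}\approx(2\pi\rho)^{m/2}\left(\det\nabla^2h(\bu)\right)^{-1/2}$. Your prefactor $e^{\varphi_\rho^{-}(\bu^\star)}(2\pi\rho)^{m/2}\left(\det 2\nabla^2h(\bu^\star)\right)^{-1/2}$ then tends to the constant $2^{-m/2}$ for every $\bu^\star$, so no localisation of $\bz$ near $\bx$ is needed. Your concentration argument is valid, but it has two limitations:
\begin{itemize}
\item it only controls the ratio of the two sides where $\kappa_\rho(\cdot,\bx)$ carries its mass, not in the tails;
\item it requires regularity of $\varphi_\rho^{-}$ uniform in $\rho$, which is most naturally obtained from that same expansion.
\end{itemize}
Likewise, your final absorption into $\psi_\rho$ is legitimate because, by the same Laplace computation, $\psi_\rho$ and $\varphi_\rho^{+}$ differ asymptotically only by the constant $\tfrac{m}{2}\log 2$.

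In short, your route gives explicit constants and shows where $\bu$ concentrates. The paper's exact factorisation confines the approximation to a single scalar factor and delivers the proportionality pointwise in $(\bz,\bx)$.
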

\begin{proof}
    See Appendix.
\end{proof}
This lemma states that the Jensen-Bregman kernel \eqref{eq:JS_kernel} asymptotically arises from the marginalization of the product of two Bregman kernels under the same geometry over a third auxiliary variable $\bu \in \mathbb{R}^m$. By plugging \eqref{eq:B_kernel} into \eqref{eq:JS_kernel2} and \eqref{eq:augmented_distribution}, this result easily translates to the split distribution \eqref{eq:augmented_distribution}.
\begin{theorem}\label{th:augmented_distribution}
As $\rho$ vanishes, the split distribution $\pi_{\rho}(\bx,\bz)$ resulting from a Jensen-Bregman coupling kernel writes
\begin{equation}\label{eq:augmented_distribution3}
  \pi_{\rho}(\bx,\bz) \propto_{\bz} \int \pi_{\rho}(\bx,\bz,\bu) d\bu
\end{equation}
where $\pi_{\rho}(\bx,\bz,\bu)$ defines the augmented split distribution
\begin{multline}\label{eq:augmented_distribution2}
  \pi_{\rho}(\bx,\bz,\bu) \propto \exp\left[ - f(\bx;\by) - g(\bz) -\tfrac{1}{\rho}d_h\left(\bz,\bu\right) \right.\\
  \left.  + \varphi_{\rho}^{+}(\bz) - \tfrac{1}{\rho}d_{h}\left(\bx,\bu\right) + \varphi_{\rho}^{-}(\bu) \right] .
\end{multline}
\end{theorem}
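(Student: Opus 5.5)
The plan is to obtain Theorem~\ref{th:augmented_distribution} as a direct consequence of Lemma~\ref{lemma:JS_kernel2}, substituted into the definition \eqref{eq:augmented_distribution} of the split distribution. Throughout, I read ``$\propto_{\bz}$'' as ``equal up to a multiplicative factor that may depend on $\bx$ and $\rho$ but not on $\bz$'', which is the sense in which Lemma~\ref{lemma:JS_kernel2} is stated.

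\textbf{Step 1 (substitute the lemma).} Start from
\[
\pi_{\rho}(\bx,\bz) \propto \exp\left[-f(\bx;\by)-g(\bz)\right]\kappa_{\rho}(\bz,\bx).
\]
Replace $\kappa_{\rho}(\bz,\bx)$ by $c_\rho(\bx)\int \kappa_{\rho}^{+}(\bz,\bu)\kappa_{\rho}^{-}(\bu,\bx)\,d\bu$, which is valid asymptotically as $\rho\to 0$ by \eqref{eq:JS_kernel2}. Here $c_\rho(\bx)$ collects the $\bz$-independent factor hidden in $\propto_{\bz}$. Since $c_\rho(\bx)$ does not depend on $\bz$, the equality in $\bz$ is preserved up to this factor.

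\textbf{Step 2 (move the integral outside).} The factor $\exp[-f(\bx;\by)-g(\bz)]$ does not depend on $\bu$, so it can be brought inside the integral over $\bu$. All integrands are nonnegative, so Tonelli's theorem justifies this interchange without any further integrability condition.

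\textbf{Step 3 (expand the kernels).} Insert the explicit forms \eqref{eq:B_kernel}--\eqref{eq:B_kernel_bis}:
\begin{itemize}
\item $\kappa_{\rho}^{+}(\bz,\bu)=\exp[-\tfrac1\rho d_h(\bz,\bu)+\varphi_\rho^+(\bz)]$;
\item $\kappa_{\rho}^{-}(\bu,\bx)=\exp[-\tfrac1\rho d_h(\bx,\bu)+\varphi_\rho^-(\bu)]$.
\end{itemize}
The second form follows from \eqref{eq:B_kernel_bis}, where the free argument occupies the first slot of $d_h$. Collecting the exponents yields exactly the integrand in \eqref{eq:augmented_distribution2}. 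This establishes \eqref{eq:augmented_distribution3}.

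\textbf{Step 4 (check the result is a density).} I must verify that \eqref{eq:augmented_distribution2} is normalizable, so that $\pi_\rho(\bx,\bz,\bu)$ is a genuine probability density. I would do this by integrating successively:
\begin{itemize}
\item over $\bz$, using the normalization of $\kappa^+_\rho(\cdot,\bu)$ together with $g$ bounded below;
\item then over $\bu$, using the normalization of $\kappa^-_\rho(\cdot,\bx)$;
\item finally over $\bx$, using the integrability of $e^{-f-g}$ assumed for \eqref{eq:posterior}.
\end{itemize}

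\textbf{Main obstacle.} The delicate point is the meaning of the asymptotic ``$\propto_{\bz}$'' inherited from Lemma~\ref{lemma:JS_kernel2}. The proportionality constant $c_\rho(\bx)$ may depend on $\bx$. Multiplying by the weights $e^{-f(\bx)}$ then leaves the $\bz$-conditional structure intact, but it could in principle distort the $\bx$-marginal. That is why the statement is only claimed in the $\propto_{\bz}$ sense.

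To handle this, I would first try to argue that $c_\rho(\bx)$ becomes asymptotically constant in $\bx$ as $\rho\to0$. The argument would be a Laplace-type expansion around $\bu\approx\bx\approx\bz$, where both kernels become Gaussian with metric $\nabla^2 h$, so their normalizers combine into a $\bx$-independent factor to leading order. Failing that, I would simply state the result in the weaker $\propto_{\bz}$ form, as the theorem does.
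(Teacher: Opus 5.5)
Your Steps 1--3 are exactly the paper's argument, which simply plugs the factorization of Lemma~\ref{lemma:JS_kernel2} and the explicit kernels \eqref{eq:B_kernel}--\eqref{eq:B_kernel_bis} into \eqref{eq:augmented_distribution}, so the proposal is correct and follows the same route. Your Step 4 goes beyond the paper, and as sketched it uses the normalizations in the wrong arguments: $\varphi_{\rho}^{+}(\bz)$ makes $\kappa_{\rho}^{+}(\bz,\cdot)$ a density in $\bu$ and $\varphi_{\rho}^{-}(\bu)$ makes $\kappa_{\rho}^{-}(\bu,\cdot)$ a density in $\bx$, so the clean order is to integrate out $\bx$, then $\bu$, then $\bz$ (which requires only $\inf f>-\infty$ and $\int e^{-g}<\infty$), whereas your order, once $e^{-g}$ is bounded by a constant, ends up needing integrability of $e^{-f}$ alone rather than of $e^{-f-g}$.
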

The augmented split distribution $\pi_{\rho}(\bx,\bz,\bu)$ in \eqref{eq:augmented_distribution2} can be interpreted as the result of two successive AXDA of the target posterior \eqref{eq:posterior}, first with the kernel $\kappa_{\rho}^{+}(\bz,\bx)$ and then with the kernel $ \kappa_{\rho}^{-}(\bu,\bx)$. 
This double-level augmentation yields a conditional structure that naturally leads to the Jensen–Bregman split Gibbs sampler (JB-SGS) described below.

\begin{remark}
    Reverting to a single-level AXDA scheme would lead to a split distribution defined by a unique auxiliary variable $\bz$ and a Bregman coupling kernel $\kappa_{\rho}^{+}(\bz,\bx)$ or $\kappa_{\rho}^{-}(\bz,\bx)$. In the Euclidean geometry, by symmetry, the Q-SGS in \eqref{eq:SGS_canonical} shows that the conditional structure of the resulting split distributions could be manageable. However, depending on choice of the kernel, its naive extension to non-Gaussian likelihoods makes either the likelihood step \eqref{eq:likelihood_step} or the prior step \eqref{eq:prior_step} difficult to handle (see Section \ref{subsec:Poisson}). 
\end{remark}

\subsection{Sampling from the augmented split distribution}
The proposed JB-SGS algorithm consists in deriving a conventional Gibbs sampler targeting the augmented split distribution $\pi_{\rho}(\bx,\bz,\bu)$, according to the following 3-step procedure\footnote{The rewriting of the conditional distribution of the second auxiliary variable $\bu$ in \eqref{eq:BJSGS_u} leverages the parallelogram law (see also proof of Lemma \ref{lemma:JS_kernel2} in Appendix).}
\begin{subequations}\label{eq:BJSGS_canonical}
\begin{align}
    \bx &\sim \pi_{\rho}(\bx|\bu) \propto \exp\left[ - f(\bx;\by) - \tfrac{1}{\rho}d_h\left(\bx,\bu\right)\right] \label{eq:BJSGS_x} \\ %
    \bz &\sim \pi_{\rho}(\bz|\bu) \propto \exp\left[ - g(\bz) - \tfrac{1}{\rho}d_h\left(\bz,\bu\right) + \varphi_{\rho}^{+}(\bz) \right]\label{eq:BJSGS_z} \\ %
    \bu &\sim \pi_{\rho}(\bu|\bx,\bz) \propto \exp\left[ - \tfrac{2}{\rho}d_h \left(\tfrac{\bx+\bz}{2},\bu\right) + \varphi_{\rho}^{-}(\bu)\right]. \label{eq:BJSGS_u}
\end{align}
\end{subequations}
The sampling scheme of JB-SGS naturally deserves a careful examination of each of its three steps. 
\subsubsection{Likelihood step \eqref{eq:BJSGS_x}}\label{subsubsec:BJSGS_x}
To ease sampling, it clearly appears that the generator $h$ underlying the Bregman divergence $d_h(\bx,\bu)$ should be chosen such that this latter leads to a conjugate distribution, whenever it is possible. While selecting an appropriate generator may require some expertise in Bayesian modeling, useful guidelines can be easily drawn from the tight connections between Bregman divergences and  exponential family distributions \cite{nielsen2009statistical}. With a suitable choice of $h$, this likelihood step naturally preserves the geometry and constraints of the problem even for non-Gaussian likelihoods, unlike its Q-SGS counterpart \eqref{eq:likelihood_step}.
\subsubsection{Prior step \eqref{eq:BJSGS_z}}
The practical implementation of the  prior step highly depends on the regularization potential $g(\cdot)$. For complex priors, advanced sampling tools may be required. Yet, any practitioner should keep in mind that this prior step boils down to a stochastic variant of a Bregman proximal mapping \cite{bauschke2017convex}. As for Q-SGS, the proposed JB-SGS may embed a (proximal) Langevin algorithm, such as the  mirror sampling algorithm designed to account for non-Euclidean geometry \cite{klatzer2025efficient}. Interpreting \eqref{eq:BJSGS_z} as a stochastic denoising, an alternative approach is to adopt a PnP strategy to sample a denoised variable $\bz$ when the observation $\bu$ is affected by noise whose statistics is driven by the divergence $d_h(\cdot,\cdot)$.
\subsubsection{Consensus step \eqref{eq:BJSGS_u}} 
This final step leads JB-SGS  to depart significantly from the sampling scheme \eqref{eq:SGS_canonical} of Q-SGS. The conditional distribution $\pi_{\rho}(\bu|\bx,\bz)$ is solely defined by the divergence $d_h(\cdot,\cdot)$. It yields an easy-to-sample distribution from the exponential family for a wide range of generators $h$. Besides, it is worth noting that this distribution behaves such that $\mathrm{E}[\bu|\bx,\bz] \rightarrow \tfrac{\bx+\bz}{2}$ as $\rho \rightarrow 0$.  This shows that this step can be interpreted as a geometry-aware consensus where the auxiliary variable $\bu$ acts as a stochastic buffer mediating the interaction between $\bx$ and $\bz$.  Yet, such augmentations have been already promoted  to reduce the interactions between samples generated by MCMC algorithms \cite{besag1993spatial,higdon1998auxiliary,van2001art}, including Q-SGS (see Section \ref{subsec:application_gaussian}). 

\section{Revisiting classical statistical models}\label{sec:application}
This section illustrates the versatility of the proposed JB-SGS by elucidating its derivations for two statistical models of the likelihood $p(\by|\bx) \propto \exp[-f(\bx;\by)]$ often  encountered when addressing Bayesian inference problems. Interestingly, these two particular instances of JB-SGS correspond to sampling schemes that have previously appeared in the literature, though derived in an empirical or ad hoc manner. For sake of simplicity\footnote{When the data-fitting term also depends on a forward operator $\mathcal{A}(\cdot)$ such that $\mathrm{E}[\by|\bx] = \mathcal{A}(\bx)$, the proposed JB-SGS may require to be granted with dedicated strategies to ease the likelihood step \eqref{eq:BJSGS_x}.}, all considered models are restricted to the case where $m=n$, $\mathrm{E}[\by|\bx] = \bx$ and $\mathrm{cov}[\by|\bx] \propto \mathbf{I}_m$. In the context of Bayesian inverse problems, this boils down to performing a simple denoising task with independent and identically distributed observations $\left\{y_i\right\}_{i=1}^m$.

\subsection{Gaussian likelihood}\label{subsec:application_gaussian}
For Gaussian likelihoods, the data-fitting term takes the form $f(\bx;\by)=\tfrac{1}{2\sigma^2} \left\|\by-\bx\right\|_2^2$. Thus, it is sound to choose a quadratic generator $h(\bx)=\tfrac{1}{2}\left\|\bx\right\| ^2_2$, which induces the squared Euclidean distance $d_h(\bx,\bu) = \tfrac{1}{2}\left\|\bx-\bu\right\|_2^2$. Thanks to conjugacy, the likelihood step \eqref{eq:BJSGS_x} reduces to sampling from a multivariate  Gaussian distribution \cite{vono2022high}, i.e.,
\begin{equation*}
\bx|\by,\bu \sim \mathcal{N}\left(\boldsymbol{\mu},\boldsymbol{\Sigma}\right) \ \text{with} \ 
\left\{
    \begin{array}{ll}
   \boldsymbol{\mu} &= \boldsymbol{\Sigma}^{-1} \left(\tfrac{1}{\sigma^2} \by + \tfrac{1}{\rho} \bu\right) \\
    \boldsymbol{\Sigma}^{-1} &= \left(\tfrac{1}{\sigma^2} + \tfrac{1}{\rho}\right) \mathbf{I}_m.    
    \end{array}
\right.
\end{equation*}
The prior step \eqref{eq:BJSGS_z} consists in sampling from the distribution $\pi_{\rho}(\bz|\bu) \propto \exp\left( - g(\bz) - \tfrac{1}{2\rho}\left\|\bz-\bu\right\|_2^2 \right)$. When this latter is not a known and easy-to-sample distribution, it requires a restricted Gaussian oracle (RGO) \cite{lee2021structured} whose design depends on the characteristics of the regularization potential $g(\cdot)$. In the simplest case of a smooth and log-concave potential, this RGO can be limited to performing one step of a Langevin diffusion, as in \cite{Faye2024,Faye_2025sub}. In a more general yet challenging case of a non-smooth and non-convex but proximable regularization potential $g(\cdot)$, proximal samplers are required \cite{pereyra2016proximal,durmus2018efficient,liang2023a}. Conversely, adopting a PnP strategy, this prior step can be performed using a stochastic denoiser such as a pretrained diffusion-based model, as proposed in \cite{Coeurdoux2024pnp} and successfully used in recent works \cite{meier2025clustering,gao2025diffusion,zhang2026bayesian}. In the case of an inexact prior step, the generated samples are drawn from a stationary distribution with bias, which  can be corrected using a Metropolis-Hastings procedure \cite{robert1999monte}. 
Finally, the consensus step \eqref{eq:BJSGS_u} straightforwardly samples from a Gaussian distribution, i.e., $\bu |\bx,\bz \sim \mathcal{N}(\tfrac{\bx+\bz}{2},\tfrac{2}{\rho} \mathbf{I}_m)$. 

Here, it is worth emphasizing that the Bregman coupling kernel is already symmetric defined through a squared Euclidean distance, i.e., $\kappa_{\rho}^{\pm}(\bx,\bz) \propto \exp\left[-\tfrac{1}{2\rho} \left\|\bx-\bz\right\|_2^2\right]$. Thus, in this simple Gaussian setting, the associated Jensen-Bregman kernel reduces to the Bregman kernel, i.e., $\kappa_{\rho}(\bx,\bz) = \kappa_{\rho}^{\pm}(\bx,\bz)$, which means that JB-SGS and Q-SGS target the same split distribution. This equivalence might raise the question of the actual benefit of the 3-step JB-SGS sampling scheme \eqref{eq:BJSGS_canonical} over the seemingly simpler 2-step Q-SGS  \eqref{eq:SGS_canonical} when working in an Euclidean geometry. One answer is brought by noting that the sampling scheme \eqref{eq:BJSGS_canonical} matches the \emph{split-and-augmented} Gibbs sampler (Q-SAGS) proposed in \cite{Vono2019split } as an extension of Q-SGS. In that work, the introduction of the second auxiliary variable $\bu$ was empirically motivated to improve the mixing properties of Q-SGS. Indeed,  when comparing the performance of Q-SGS and Q-SAGS on various large-scale imaging inverse problems, such improvements were notably observed by monitoring the autocorrelation functions of the resulting Markov chains. It is worth noting that  the main difference between Q-SAGS and Q-SGS lies in the additional simple consensus step \eqref{eq:BJSGS_u}. Thus, the proposed JB-SGS (or equivalently Q-SAGS) can be systematically derived in place of Q-SGS to improve sampling, at the price of a negligible computational overhead.

\subsection{Poisson likelihood}\label{subsec:Poisson}
Poisson likelihoods naturally arise in counting processes, such as photon-limited imaging \cite{melidonis2023efficient}. In this case, the data-fitting term is defined as a generalized Kullback-Leibler divergence $f(\bx;\by) = \sum_{i=1}^m x_i - y_i \log x_i$. Choosing a divergence induced by a quadratic generator as in Section \ref{subsec:application_gaussian} would lead to a geometry-incompliant and computationally demanding likelihood step \cite{vono2019bayesian}. Instead, as recommended in Section \ref{subsubsec:BJSGS_x}, the generator $h$ should be chosen to preserve the underlying geometry and to ease the subsequent computations. These requirements are naturally satisfied by the Burg entropy $h(\bx)=-\sum_{i=1}^m \log x_i$ defined for $\bx \in \mathbb{R}^m_{++}$, resulting in coupling kernels $\kappa^{\pm}_\rho(\cdot,\cdot)$ defined through the Itakura-Saito divergence $d_h(\bx,\bu) = \sum_{i=1}^m \left(  \tfrac{x_i}{u_i} - \log \tfrac{x_i}{u_i} -1 \right)$. Straightforward computations show that JB-SGS is characterized by simple likelihood and consensus steps given by
\begin{align}
    \bx |\by,\bu &\sim \prod_{i=1}^m \mathcal{G}\left(y_i + \frac{1}{\rho}+1, 1+\frac{1}{\rho u_i}\right)\\
    \bu |\bz,\bx &\sim \prod_{i=1}^m \mathcal{IG}\left(\frac{2}{\rho}, \frac{x_i+z_i}{\rho}\right)
\end{align}
where $\mathcal{G}(\cdot,\cdot)$ and  $\mathcal{IG}(\cdot,\cdot)$ stand for Gamma and inverse-Gamma distributions, respectively. The prior step consists in sampling from the conditional distribution $\pi_{\rho}(\bz|\bu) \propto \exp\left[- E(\bz,\bu)\right]$ whose overall potential writes
\begin{equation}\label{eq:potential_JBSGS}
 E(\bz,\bu) = g(\bz) + \sum_{i=1}^m \log z_i + \frac{1}{\rho} \sum_{i=1}^m \left(\frac{z_i}{u_i} - \log z_i\right).
\end{equation}
This JB-SGS sampling scheme coincides with the one proposed in \cite{faye2025bregman} for  Poisson inverse imaging problems. Yet, in that work, the authors followed a reciprocal derivation, starting with a single-level AXDA with a Bregman kernel $\kappa_\rho^+(\bz,\bx)$, subsequently complemented by a second level  with a Bregman kernel $\kappa_{\rho}^-(\bu,\bx)$. One motivation of this second splitting was to reach a more amenable prior step. Indeed, the authors of \cite{faye2025bregman} showed that, when reverting to a single-level AXDA induced by a Bregman kernel\footnote{The kernel $\kappa_\rho^-(\bz,\bx)$ has been chosen  to preserve conjugacy in the likelihood step. Adopting the kernel $\kappa_\rho^+(\bz,\bx)$ would lead to a potential $E(\bz,\bu)$ similar to \eqref{eq:potential_JBSGS} but with a more computationally demanding  likelihood step.} $\kappa_{\rho}^-(\bz,\bx)$, the SGS prior step involves a potential defined by
\begin{equation}\label{eq:potential_SGS}
 E(\bz,\bu) = g(\bz) + \left(1+\frac{1}{\rho}\right) \sum_{i=1}^m \log z_i + \frac{1}{\rho} \sum_{i=1}^m \left(\frac{x_i}{z_i}\right).
\end{equation}
It clearly appears that this potential owns much less favorable regularity properties with respect to $\bz$ than the potential \eqref{eq:potential_JBSGS} exhibited by JB-SGS.  This finding  illustrates once again the value of considering a Bregman-Jensen kernel. One alternative would be to reverting to the equivalent single-level AXDA formulation of JB-SGS. The split distribution would be defined through a Bregman-Jensen coupling kernel defined by the symmetrized  Itakura-Saito divergence
\begin{equation}
 D_h(\bz,\bx) = \sum_{i=1}^m \log \left(\frac{z_i+x_i}{2\sqrt{x_i z_i}{}}\right).
\end{equation}
Then the corresponding SGS exhibits a likelihood step lacking conjugacy properties, defined by the product of the Poisson likelihood and a scaled Beta prime distribution \cite{Perez2017}. Finally, these insights show that the proposed JB-SGS combines all the practical values expected from splitting-based samplers.

\section{Conclusion}
This letter introduced a generalized split Gibbs sampling (SGS) framework for non-Euclidean geometries using a Jensen-Bregman divergence-based coupling kernel. This approach extends the applicability of SGS beyond Gaussian likelihoods, providing a unified and principled foundation for existing empirical methods. This work opens new avenues for Bayesian estimation in complex geometric spaces, with potential applications in inverse problems and beyond.

\section*{Appendix: Sketch of proof of Lemma \ref{lemma:JS_kernel2}}\label{app:JS_kernel2}
Resorting to the parallelogram law,  the Jensen-Bregman divergence $D_{h}(\bz,\bx)$ can be rewritten \cite{nielsen2021geodesic}
\begin{equation*}
    D_{h}(\bz,\bx) = \frac{1}{2}  \left[d_h\left(\bz,\bu\right) +d_{h}\left(\bx,\bu\right) - 2d_h\left(\frac{\bx+\bz}{2},\bu\right)\right].
\end{equation*}
for any auxiliary variable $\bu\in \mathbb{R}^m$. After  incorporating the normalizing potentials $\varphi^{-}(\bu)$ and $\varphi^{+}(\bz)$, we can write 
\begin{eqnarray*}
\exp\left[-\tfrac{2}{\rho}D_{h}(\bz,\bx)+\varphi_\rho^{+}(\bz)\right]
\exp\left[-\tfrac{2}{\rho}d_h\left(\tfrac{\bx+\bz}{2},\bu\right)  + \varphi_\rho^{-}(\bu) \right] \\
= \exp\left[-\tfrac{1}{\rho} d_h\left(\bz,\bu\right) - \tfrac{1}{\rho} d_{h}\left(\bx,\bu\right) + \varphi_\rho^{+}(\bz) + \varphi_\rho^{-}(\bu) \right].
\end{eqnarray*}
By setting $\psi_\rho(\cdot)=\varphi_\rho^{+}(\cdot)$ in \eqref{eq:JS_kernel}, it yields
\begin{equation}\label{eq:kernel_product}
    \kappa_\rho(\bz,\bx)\, \kappa_{\rho/2}^{-}\left(\bu,\tfrac{\bz+\bx}{2}\right) =
    \kappa_\rho^+\left(\bz,\bu\right)\, \kappa_\rho^-\left(\bu,\bx\right)
\end{equation}
Using a Laplace's approximation of $\exp \varphi_{\rho/2}^-(\bu)$ defining $\kappa_{\rho/2}^{-}\left(\bu,\cdot\right)$, when $\rho$ vanishes, the coupling kernel ${\kappa}_{\rho/2}^{-}(\bu,\cdot) $ behaves as $\rho^{-1/2}\exp\left[ -\tfrac{2}{\rho} {d}_{h}(\cdot,\bu) + \phi(\bu) \right]$ which is a kernel satisfying Assumption \ref{assump:kernel}. Then,  integrating with respect to $\bu$ on both sides of \eqref{eq:kernel_product} yields the result \eqref{eq:JS_kernel2} when $\rho\rightarrow 0$.

\bibliographystyle{IEEEtran}

\bibliography{strings_all_ref,biblio}

\begin{thebibliography}{10}
\providecommand{\url}[1]{#1}
\csname url@samestyle\endcsname
\providecommand{\newblock}{\relax}
\providecommand{\bibinfo}[2]{#2}
\providecommand{\BIBentrySTDinterwordspacing}{\spaceskip=0pt\relax}
\providecommand{\BIBentryALTinterwordstretchfactor}{4}
\providecommand{\BIBentryALTinterwordspacing}{\spaceskip=\fontdimen2\font plus
\BIBentryALTinterwordstretchfactor\fontdimen3\font minus
  \fontdimen4\font\relax}
\providecommand{\BIBforeignlanguage}[2]{{%
\expandafter\ifx\csname l@#1\endcsname\relax
\typeout{** WARNING: IEEEtran.bst: No hyphenation pattern has been}%
\typeout{** loaded for the language `#1'. Using the pattern for}%
\typeout{** the default language instead.}%
\else
\language=\csname l@#1\endcsname
\fi
#2}}
\providecommand{\BIBdecl}{\relax}
\BIBdecl

\bibitem{Vono2019split}
M.~Vono, N.~Dobigeon, and P.~Chainais, ``Split-and-augmented {G}ibbs sampler --
  {A}pplication to large-scale inference problems,'' \emph{IEEE Trans. Signal
  Process.}, vol.~67, no.~6, pp. 1648--1661, 2019.

\bibitem{vono2022efficient}
M.~Vono, D.~Paulin, and A.~Doucet, ``Efficient {MCMC} sampling with
  dimension-free convergence rate using {ADMM}-type splitting,'' \emph{J. Mach.
  Learning Research}, vol.~23, no.~1, pp. 1100--1168, 2022.

\bibitem{neal2011distributed}
P.~Neal, C.~Eric, P.~Borja, and E.~Jonathan, ``Distributed optimization and
  statistical learning via the alternating direction method of multipliers,''
  \emph{Foundations and Trends{\textregistered} in Machine learning}, vol.~3,
  no.~1, pp. 1--122, 2011.

\bibitem{rendell2020global}
L.~J. Rendell, A.~M. Johansen, A.~Lee, and N.~Whiteley, ``Global consensus
  {M}onte {C}arlo,'' \emph{J. Comput. Graph. Stat.}, vol.~30, no.~2, pp.
  249--259, 2020.

\bibitem{pereyra2023split}
M.~Pereyra, L.~A. Vargas-Mieles, and K.~C. Zygalakis, ``The split {G}ibbs
  sampler revisited: improvements to its algorithmic structure and augmented
  target distribution,'' \emph{SIAM J. Imag. Sci.}, vol.~16, no.~4, pp.
  2040--2071, 2023.

\bibitem{plassier2021dg}
V.~Plassier, M.~Vono, A.~Durmus, and E.~Moulines, ``{DG-LMC}: a turn-key and
  scalable synchronous distributed {MCMC} algorithm via {L}angevin {M}onte
  {C}arlo within {G}ibbs,'' in \emph{Proc. Int. Conf. Machine Learning
  (ICML)}.\hskip 1em plus 0.5em minus 0.4em\relax PMLR, 2021, pp. 8577--8587.

\bibitem{Faye_2025sub}
\BIBentryALTinterwordspacing
E.~C. Faye, M.~D. Fall, S.~Delchini, and N.~Dobigeon, ``Bridging data-driven
  priors via the score function for posterior sampling – {C}omparative review
  and experimental study,'' \emph{arXiv preprint arXiv:2606.14800}, 2026.
  [Online]. Available: \url{https://arxiv.org/abs/2606.14800/}
\BIBentrySTDinterwordspacing

\bibitem{vono2018sparse}
M.~Vono, N.~Dobigeon, and P.~Chainais, ``Sparse {B}ayesian binary logistic
  regression using the split-and-augmented {G}ibbs sampler,'' in \emph{Proc.
  IEEE Workshop Mach. Learning for Signal Process. (MLSP)}, 2018, pp. 1--6.

\bibitem{Coeurdoux2024pnp}
F.~Coeurdoux, N.~Dobigeon, and P.~Chainais, ``Plug-and-play split {G}ibbs
  sampler: embedding deep generative priors in {B}ayesian inference,''
  \emph{IEEE Trans. Image Process.}, vol.~33, pp. 3496--3507, May 2024.

\bibitem{bouman2023generative}
C.~A. Bouman and G.~T. Buzzard, ``Generative plug and play: Posterior sampling
  for inverse problems,'' in \emph{Ann. Allerton Conf. Comm. Control Computing
  (Allerton)}, 2023, pp. 1--7.

\bibitem{wu2024principled}
Z.~Wu, Y.~Sun, Y.~Chen, B.~Zhang, Y.~Yue, and K.~Bouman, ``Principled
  probabilistic imaging using diffusion models as plug-and-play priors,'' in
  \emph{Adv. in Neural Information Process. Systems (NeurIPS)}, vol.~37, 2024,
  pp. 118\,389--118\,427.

\bibitem{Sun2024provable}
Y.~Sun, Z.~Wu, Y.~Chen, B.~T. Feng, and K.~L. Bouman, ``Provable probabilistic
  imaging using score-based generative priors,'' \emph{IEEE Trans. Comput.
  Imag.}, 2024.

\bibitem{Faye2024}
E.~C. Faye, M.~D. Fall, and N.~Dobigeon, ``Regularization by denoising:
  {B}ayesian model and {L}angevin-within-split {G}ibbs sampling,'' \emph{IEEE
  Trans. Image Process.}, vol.~34, pp. 221--234, Jan. 2025.

\bibitem{vono2022high}
M.~Vono, N.~Dobigeon, and P.~Chainais, ``High-dimensional {G}aussian sampling:
  a review and a unifying approach based on a stochastic proximal point
  algorithm,'' \emph{SIAM Rev.}, vol.~64, no.~1, pp. 3--56, 2022.

\bibitem{faye2025bregman}
\BIBentryALTinterwordspacing
E.~C. Faye, M.~D. Fall, N.~Dobigeon, and E.~Barat, ``Bregman geometry-aware
  split {G}ibbs sampling for {B}ayesian {P}oisson inverse problems,''
  \emph{arXiv preprint arXiv:2511.12257}, 2025. [Online]. Available:
  \url{https://arxiv.org/abs/2511.12257/}
\BIBentrySTDinterwordspacing

\bibitem{vono2020asymptotically}
M.~Vono, N.~Dobigeon, and P.~Chainais, ``Asymptotically exact data
  augmentation: Models, properties, and algorithms,'' \emph{J. Comput. Graph.
  Stat.}, vol.~30, no.~2, pp. 335--348, 2020.

\bibitem{bregman1967relaxation}
L.~M. {B}regman, ``The relaxation method of finding the common point of convex
  sets and its application to the solution of problems in convex programming,''
  \emph{USSR computational mathematics and mathematical physics}, vol.~7,
  no.~3, pp. 200--217, 1967.

\bibitem{banerjee2005clustering}
A.~Banerjee, S.~Merugu, I.~S. Dhillon, and J.~Ghosh, ``Clustering with
  {B}regman divergences,'' \emph{J. Mach. Learning Research}, vol.~6, no. Oct,
  pp. 1705--1749, 2005.

\bibitem{nielsen2011burbea}
F.~Nielsen and S.~Boltz, ``The {B}urbea-{R}ao and {B}hattacharyya centroids,''
  \emph{IEEE Trans. Inf. Theory}, vol.~57, no.~8, pp. 5455--5466, 2011.

\bibitem{nielsen2009statistical}
F.~Nielsen and V.~Garcia, ``Statistical exponential families: A digest with
  flash cards,'' \emph{arXiv preprint arXiv:0911.4863}, 2009.

\bibitem{bauschke2017convex}
H.~H. Bauschke and P.~L. Combettes, \emph{Convex Analysis and Monotone Operator
  Theory in Hilbert Spaces}, 2nd~ed., ser. CMS Books in Mathematics.\hskip 1em
  plus 0.5em minus 0.4em\relax Springer, 2017.

\bibitem{klatzer2025efficient}
T.~Klatzer, S.~Melidonis, M.~Pereyra, and K.~C. Zygalakis, ``Efficient
  {B}ayesian computation using plug-and-play priors for {P}oisson inverse
  problems,'' \emph{SIAM J. Imag. Sci.}, vol.~19, no.~2, pp. 1325--1363, June
  2026.

\bibitem{besag1993spatial}
J.~Besag and P.~J. Green, ``Spatial statistics and {B}ayesian computation,''
  \emph{J. R. Stat. Soc. Series B: Stat. Methodol.}, vol.~55, no.~1, pp.
  25--37, 1993.

\bibitem{higdon1998auxiliary}
D.~M. Higdon, ``Auxiliary variable methods for {M}arkov chain {M}onte {C}arlo
  with applications,'' \emph{J. Amer. Stat. Assoc.}, vol.~93, no. 442, pp.
  585--595, 1998.

\bibitem{van2001art}
D.~A. Van~Dyk and X.-L. Meng, ``The art of data augmentation,'' \emph{J.
  Comput. Graph. Stat.}, vol.~10, no.~1, pp. 1--50, 2001.

\bibitem{lee2021structured}
Y.~T. Lee, R.~Shen, and K.~Tian, ``Structured logconcave sampling with a
  restricted {G}aussian oracle,'' in \emph{Proc. Conf. Learning Theory}.\hskip
  1em plus 0.5em minus 0.4em\relax PMLR, 2021, pp. 2993--3050.

\bibitem{pereyra2016proximal}
M.~Pereyra, ``Proximal {M}arkov chain {M}onte {C}arlo algorithms,'' \emph{Stat.
  Comput.}, vol.~26, pp. 745--760, 2016.

\bibitem{durmus2018efficient}
A.~Durmus, E.~Moulines, and M.~Pereyra, ``Efficient {B}ayesian computation by
  proximal {M}arkov chain {M}onte {C}arlo: when {L}angevin meets {M}oreau,''
  \emph{SIAM J. Imag. Sci.}, vol.~11, no.~1, pp. 473--506, 2018.

\bibitem{liang2023a}
J.~Liang and Y.~Chen, ``A proximal algorithm for sampling,'' \emph{Trans. Mach.
  Learning Research}, 2023.

\bibitem{meier2025clustering}
D.~Meier, S.~Yu, S.~Nandy, P.~Ghosal, and K.~Gan, ``Clustering by denoising:
  Latent plug-and-play diffusion for single-cell data,'' in \emph{Proc. IEEE
  Int. Conf. Learn. Represent. (ICLR)}, 2026.

\bibitem{gao2025diffusion}
H.~Gao, T.~Huang, L.~Guo, J.~He, and Y.~C. Eldar, ``Diffusion-prior split
  {G}ibbs sampling for synthetic aperture radar imaging under incomplete
  measurements,'' \emph{arXiv preprint arXiv:2512.02768}, 2025.

\bibitem{zhang2026bayesian}
Y.~Zhang, R.~Guo, and Y.~C. Eldar, ``Bayesian signal component decomposition
  via diffusion-within-{G}ibbs sampling,'' \emph{arXiv preprint
  arXiv:2602.10792}, 2026.

\bibitem{robert1999monte}
C.~P. Robert, G.~Casella, and G.~Casella, \emph{{M}onte {C}arlo statistical
  methods}.\hskip 1em plus 0.5em minus 0.4em\relax Springer, 1999, vol.~2.

\bibitem{melidonis2023efficient}
S.~Melidonis, P.~Dobson, Y.~Altmann, M.~Pereyra, and K.~Zygalakis, ``Efficient
  {B}ayesian computation for low-photon imaging problems,'' \emph{SIAM J. Imag.
  Sci.}, vol.~16, no.~3, pp. 1195--1234, 2023.

\bibitem{vono2019bayesian}
M.~Vono, N.~Dobigeon, and P.~Chainais, ``Bayesian image restoration under
  {P}oisson noise and log-concave prior,'' in \emph{Proc. IEEE Int. Conf.
  Acoust., Speech and Signal Process. (ICASSP)}.\hskip 1em plus 0.5em minus
  0.4em\relax IEEE, 2019, pp. 1712--1716.

\bibitem{Perez2017}
M.-E. P{\'e}rez, L.~R. Pericchi, and I.~C. Ram{\'i}rez, ``{The Scaled {B}eta2
  Distribution as a Robust Prior for Scales},'' \emph{Bayesian Analysis},
  vol.~12, no.~3, pp. 615 -- 637, 2017.

\bibitem{nielsen2021geodesic}
F.~Nielsen, ``On geodesic triangles with right angles in a dually flat space,''
  in \emph{Progress in information geometry: theory and applications}.\hskip
  1em plus 0.5em minus 0.4em\relax Springer, 2021, pp. 153--190.

\end{thebibliography}

\end{document}